\newtheorem{theorem}{Theorem}
\newtheorem{lemma}[theorem]{Lemma}
\theoremstyle{definition}
\title{\bf A Characterization of Maximum Nash Welfare \\for Indivisible Goods}
\author{Warut Suksompong}
\affil{National University of Singapore}
\date{\vspace{-10mm}}
\begin{document}

\maketitle

\begin{abstract}
In the allocation of indivisible goods, the maximum Nash welfare (MNW) rule, which chooses an allocation maximizing the product of the agents' utilities, has received substantial attention for its fairness.
We characterize MNW as the only additive welfarist rule that satisfies envy-freeness up to one good.
Our characterization holds even in the simplest setting of two agents.
\end{abstract}

\section{Introduction} \label{sec:intro}

The fair allocation of limited resources among interested parties---often referred to as \emph{fair division}---is a fundamental problem in economics \citep{Moulin03}.
Its applications range from inheritance division to budget distribution to divorce settlement.
While much of the early work in fair division dealt with the case of \emph{divisible} resources such as time or land, a significant portion of the recent literature has focused on the allocation of \emph{indivisible} goods such as artwork, jewelry, and electronic devices \citep{BouveretChMa16,AmanatidisAzBi22}.

Which method should one use to allocate indivisible goods fairly?
Among numerous possible methods that one may employ, \citet{CaragiannisKuMo19} proposed using the \emph{maximum Nash welfare (MNW)} rule, which chooses an allocation that maximizes the product of the agents' utilities, or equivalently, the sum of their logarithms.
These authors demonstrated the ``unreasonable fairness'' of MNW: the allocation output by this rule always satisfies \emph{envy-freeness up to one good (EF1)}---any envy that an agent may have toward another agent can be eliminated by removing some good in the latter agent's bundle---along with the economic efficiency notion of \emph{Pareto optimality (PO)}---no reallocation of the goods makes at least one agent better off and no agent worse off.\footnote{This result relies on a tie-breaking specification when it is impossible to give every agent nonzero utility.}
Even though this result admits a rather simple proof, it is arguably one of the most important results in the theory of fair division, as it has inspired many further investigations of fairness with indivisible resources.\footnote{As of December 2022, the paper by \citet{CaragiannisKuMo19} has received over 450 citations according to Google Scholar.}

While MNW offers compelling guarantees in EF1 and PO, it is far from being the only rule to do so.
For instance, \citet{BarmanKrVa18} devised another procedure that also ensures the same two properties; their procedure has a more complex description but a better running time guarantee than MNW.
More generally, since the definitions of EF1 and PO only concern individual profiles and do not relate different profiles (unlike, e.g., strategyproofness), one could define a large number of rules that return an EF1 and PO allocation for every profile.
This raises the question of whether MNW is the ``fairest'' within some meaningful set of rules.
A natural class of rules is that of \emph{additive welfarist rules}, which choose an allocation maximizing a welfare notion that can be expressed as the sum of some increasing function of the agents' utilities \citep[p.~67]{Moulin03}.
MNW corresponds to the case of the logarithm function, whereas the \emph{maximum utilitarian welfare (MUW)} rule---another commonly studied rule---results from taking the identity function.
By definition, additive welfarist rules produce PO allocations, because a Pareto improvement would necessarily lead to a higher welfare.\footnote{One needs to be careful with tie-breaking when the welfare is (negative) infinity, but this is not our focus.}

The purpose of this note is to characterize MNW as the \emph{only} additive welfarist rule that guarantees EF1, thereby establishing it as the ``fairest'' rule within this class.
To the best of our knowledge, this is the first characterization of MNW in the setting of indivisible goods.\footnote{Analogues of MNW have been characterized within the class of additive welfarist rules in the context of randomized collective choice \citep[Sec.~6]{BogomolnaiaMoSt02}, cake cutting \citep[Sec.~4.5]{SegalhaleviSz19}, and cake sharing \citep[App.~A]{BeiLuSu22}. 
\citet{FreemanShVa20} characterized a variant of MNW for fractional allocations.}

\section{Preliminaries}

Let $N = \{1,\dots,n\}$ be the set of agents and $G = \{g_1,\dots,g_m\}$ be the set of goods.
Each agent~$i\in N$ has a utility function $u_i \colon 2^G\rightarrow\mathbb{R}_{\ge 0}$; we write $u_i(g)$ instead of $u_i(\{g\})$ for a single good $g\in G$.
We assume that the utility functions are additive,\footnote{Without additivity, an allocation produced by MNW is not necessarily EF1 \citep[App.~C]{CaragiannisKuMo19}.} that is, $u_i(G') = \sum_{g\in G'}u_i(g)$ for all $i\in N$ and $G'\subseteq G$.
A \emph{profile} consists of $N$, $G$, and $(u_i)_{i\in N}$.
An \emph{allocation} $A = (A_1,\dots,A_n)$ is an ordered partition of $G$ into $n$~bundles such that bundle~$A_i$ is allocated to agent~$i$.
An allocation is \emph{EF1} if for all pairs $i,j\in N$ such that $A_j\neq\emptyset$, there exists a good $g\in A_j$ with the property that $u_i(A_i) \ge u_i(A_j\setminus\{g\})$.
A \emph{rule} maps any given profile to an allocation.

Let $f\colon [0,\infty)\rightarrow [-\infty, \infty)$ be an increasing function,\footnote{Note that even if we were to include $\infty$ in the codomain of $f$, it cannot be in the actual range because $f$ is increasing.}
and assume that $f$ is differentiable on $(0,\infty)$.
Given any profile, an \emph{additive welfarist rule with function $f$} chooses an allocation~$A$ that maximizes the welfare $\sum_{i\in N}f(u_i(A_i))$.
If there are multiple such allocations, the rule may choose one arbitrarily; the choice of tie-breaking will not matter for our result.
As mentioned earlier, MNW corresponds to taking $f(x) = \ln x$ (or, more generally, $f(x) = a\ln x + b$ for some constants $a > 0$ and $b\in\mathbb{R}$), while MUW corresponds to taking $f(x) = x$ (or, more generally, $f(x) = ax + b$ for some constants $a > 0$ and $b\in\mathbb{R}$).

\section{The Result}

We now state our characterization of MNW, which holds even in the simplest setting of two agents.

\begin{theorem}
\label{thm:main}
Suppose that an additive welfarist rule with function $f$ returns an EF1 allocation for every profile with $n = 2$ agents in which it is possible to give both agents nonzero utility.
Then, there exist constants $a > 0$ and $b\in\mathbb{R}$ such that $f(x) = a\ln x + b$ for all $x\ge 0$.
\end{theorem}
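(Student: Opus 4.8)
The plan is to argue in the contrapositive and to exploit a tension between two facts. First, envy-freeness up to one good is invariant under rescaling a single agent's utility function: for agent~$i$ the EF1 inequalities involve only $u_i$ and scale homogeneously, while for the other agent $u_i$ does not appear at all. Second, the welfarist argmax is invariant under such rescalings precisely when $f$ is logarithmic. Concretely, writing $g(s):=s\,f'(s)$, I aim to show that $g$ is constant; since $f$ is increasing this gives $f'(s)=a/s$ with $a>0$, hence $f(s)=a\ln s+b$ on $(0,\infty)$, and monotonicity forces the boundary value $f(0)=\lim_{s\to0^+}f(s)=-\infty$, matching the claimed form for all $x\ge 0$.

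The engine of the proof is a small two-agent gadget that reduces the rule's decision to a single binary choice. I would use four goods: a good $p_1$ that only agent~$1$ values and a good $p_2$ that only agent~$2$ values (these force both bundles to be nonempty at any optimum, since returning a private good to its owner strictly raises welfare, so empty bundles and the value $f(0)$ never interfere); a contested ``anchor'' good pinned to agent~$2$ by giving agent~$1$ only a tiny value for it, so that transferring it can never help; and one ``swing'' good whose placement is the only live decision. Choosing the swing good's values so that the allocation $A'$ handing it to agent~$1$ leaves agent~$2$ envying agent~$1$ by more than one good, while the allocation $A$ leaving it with agent~$2$ is always EF1, makes $A$ and $A'$ the only candidate optima. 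Rescaling agent~$1$'s utilities by a parameter $t>0$ does not change EF1 status, so the rule must never output $A'$; since the optimum lies in $\{A,A'\}$, this yields, for all $t>0$,
\begin{equation*}
f\big(t(P+\gamma_1+\gamma_2)\big)-f\big(t(P+\gamma_1)\big)\ \le\ f(Q+\delta_2)-f(Q),
\end{equation*}
where $P,\gamma_1,\gamma_2$ are agent~$1$'s values and $Q,\delta_2$ are agent~$2$'s values in the gadget.

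A mirror gadget, in which agent~$1$ is instead the agent who would envy, produces the reverse inequality of the same shape, bounding an increment $f(t(P'+\gamma'_2))-f(tP')$ \emph{from below} by a constant $f(Q'+\delta'_1+\delta'_2)-f(Q'+\delta'_1)$. The next step is bookkeeping: I choose the parameters of the two gadgets so that both increments refer to the \emph{same} multiplicative window $[tX,\,tY]$ and so that the two constant bounds coincide; sandwiching then shows $f(tY)-f(tX)$ is independent of $t$, i.e.\ $f(\rho s)-f(s)$ depends only on $\rho$ for all $s>0$ and every ratio $\rho=Y/X$ in a range (the ``envy by more than one good'' requirement in the mirror gadget costs the mild restriction $\rho>2$). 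Differentiating in $s$ gives $g(\rho s)=g(s)$ for all such $\rho$, and a short chaining argument upgrades this to $g$ being constant on all of $(0,\infty)$, completing the functional-equation step.

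The hard part will be the gadget bookkeeping rather than the functional equation: I must certify that the two designated allocations really are the welfare-maximizing pair for \emph{every} value of $t$, despite having no control over the global shape of $f$ (it could, e.g., be bounded, so ``very valuable'' goods need not create large welfare gaps). The private-good and tiny-anchor tricks are designed precisely so that each competing allocation is beaten by a transfer argument using only monotonicity of $f$, not its growth; and I must make the envy violation \emph{strict}, so that the offending allocation is the unique maximizer and the conclusion is immune to the rule's tie-breaking. A secondary subtlety is matching the two constant bounds, which requires checking that increments $f(u+\delta)-f(u)$ sweep a common range as the free parameters vary; this is where I expect to spend the most care.
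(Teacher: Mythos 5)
Your high-level plan---use EF1 to pin the welfarist rule down to a binary choice, extract inequalities between $f$-increments at scale $t$ and a fixed constant, and solve the resulting functional equation to get $f'(x)=a/x$---shares its skeleton with the paper's proof (which shows $f((k+1)x)-f(kx)$ must be constant in $x$ and then deduces $xf'(x)$ is constant), and your endgame (chaining $g(\rho s)=g(s)$ for ratios $\rho>2$ to conclude $g$ is constant) would be fine. But your gadgets have a fatal flaw. In the first gadget, the allocation $A'$ gives agent~1 the bundle $\{p_1,\text{swing}\}$, and you claim agent~2 then envies agent~1 by more than one good. This is impossible: $p_1$ is worth $0$ to agent~2, so removing the swing good from agent~1's bundle leaves a set worth $0\le u_2(A_2)$, and EF1 for agent~2 is automatically satisfied. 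In general, an agent can never envy a bundle beyond one good unless that bundle contains at least \emph{two} goods the agent values positively; this is exactly why the paper's construction uses $2k+1$ goods, so that the allocation to be ruled out hands agent~1 at least two identically-valued contested goods. With a single swing good, EF1 never constrains the rule and no inequality on $f$ follows.

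The anchor creates a second, circular problem. If agent~1's value for the anchor is exactly $0$, the anchor is indeed pinned to agent~2 by a strict transfer argument (this is how the paper pins $g_1$), but then in your mirror gadget agent~2's bundle contains only one good that agent~1 values, and the EF1 violation you need for agent~1 disappears for the same reason as above. If instead agent~1's value is a tiny $\eta>0$, the pinning no longer follows from monotonicity alone: ruling out moving the anchor to agent~1 requires $f(t(\cdot+\eta))-f(t\,\cdot)$ to stay below a fixed decrement of agent~2's welfare for \emph{every} $t>0$, which is precisely the global control over $f$ you admit you lack; worse, the allocation placing the anchor with agent~1 and the swing with agent~2 is then always EF1, so welfare-maximality only yields a disjunction of inequalities, not the clean bound you want. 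The paper sidesteps both issues and your deferred ``matching of constants'' step entirely: assuming $f((k+1)y)-f(ky)>f((k+1)z)-f(kz)$ for some $y,z$, it perturbs by $\varepsilon$ and takes the profile $u_1=(0,y,\dots,y)$, $u_2=(z-\varepsilon,z,\dots,z)$ on $2k+1$ goods, where the zero-valued good is pinned to agent~2 strictly, EF1 forces agent~1 to receive exactly $k$ goods, and comparing that split with the $(k+1)$-good split gives an immediate contradiction. To repair your proof you would essentially have to replicate this multi-good structure, at which point the scaling parameter $t$ and the two-gadget sandwich become unnecessary.
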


To establish \Cref{thm:main}, we will make use of the following auxiliary lemma.

\begin{lemma}
\label{lem:helper}
Let $f\colon [0,\infty)\rightarrow [-\infty, \infty)$ be an increasing function that is differentiable on $(0,\infty)$, and assume that for every positive integer~$k$, the function $h_k(x) := f((k+1)x) - f(kx)$ is constant on $(0,\infty)$.
Then, there exist constants $a > 0$ and $b\in\mathbb{R}$ such that $f(x) = a\ln x + b$  for all $x\ge 0$.
\end{lemma}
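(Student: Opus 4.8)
The plan is to convert the scaling relations into a functional equation for $f$ on integer multiples of a point, pin down the resulting multiplicative structure using monotonicity, and only at the very end upgrade from scalings by rationals to all reals using continuity. First I would exploit that each $h_k$ is constant, writing $h_k\equiv c_k$, and for a fixed $x>0$ telescope the sum $\sum_{k=1}^{K-1}\bigl(f((k+1)x)-f(kx)\bigr)=f(Kx)-f(x)$ to obtain $f(Kx)-f(x)=C_K$, where $C_K:=\sum_{k=1}^{K-1}c_k$ does not depend on $x$. Thus $f(kx)=f(x)+C_k$ for every positive integer $k$ and every $x>0$. Evaluating $f(k\ell x)$ in two ways, as $f\bigl(\ell\cdot(kx)\bigr)$ versus $f\bigl((k\ell)x\bigr)$, yields the complete additivity $C_{k\ell}=C_k+C_\ell$ with $C_1=0$; and since $f$ is (strictly) increasing, $C_k=f(kx)-f(x)>0$ for $k\ge 2$, so the sequence $(C_k)$ is positive and nondecreasing.

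The arithmetic heart of the argument is to show that a nondecreasing completely additive sequence must be logarithmic, i.e.\ $C_k=a\ln k$ for some $a>0$. I would set $a:=C_2/\ln 2>0$ and, for arbitrary $k\ge 2$ and $n$, choose $m$ with $2^m\le k^n<2^{m+1}$; monotonicity then gives $mC_2\le nC_k\le(m+1)C_2$, while taking logarithms gives $m\le n\log_2 k<m+1$, and letting $n\to\infty$ squeezes $C_k/\ln k$ to the common value $C_2/\ln 2=a$. Hence $f(kx)-f(x)=a\ln k$ for all integers $k\ge 1$ and all $x>0$.

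Finally I would remove the integrality restriction. Define $\phi(x):=f(x)-a\ln x$ on $(0,\infty)$. The relation just obtained says $\phi(kx)=\phi(x)$ for every positive integer $k$; substituting $x\mapsto x/k$ also gives $\phi(x/k)=\phi(x)$, so $\phi(rx)=\phi(x)$ for every positive rational $r$, and in particular $\phi(r)=\phi(1)=:b$ on the rationals. Because $f$ is differentiable and hence continuous, $\phi$ is continuous, and a continuous function that is constant on the dense set $\mathbb{Q}\cap(0,\infty)$ is constant everywhere; therefore $f(x)=a\ln x+b$ for all $x>0$. The value at $0$ is forced by monotonicity, since $f(0)\le\inf_{x>0}f(x)=-\infty$, matching the convention $a\ln 0+b=-\infty$, so the formula holds on all of $[0,\infty)$.

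I expect the main obstacle to be precisely this passage from discrete, multiplicative information to a statement valid at every real argument. Differentiating the hypothesis (which gives $g(x):=xf'(x)$ satisfying $g((k+1)x)=g(kx)$, whence $g$ is constant on the rationals) is tempting but stalls, because differentiability alone does not guarantee that $f'$ is continuous, so one cannot conclude that $g$ is globally constant. Working with $f$ itself, which \emph{is} continuous, sidesteps this; the remaining delicate point is the monotone-to-logarithmic step, for which the squeezing argument above---or, alternatively, using only $C_2$ and $C_3$ together with the irrationality of $\log_2 3$ and the boundedness of the resulting continuous periodic function---is the crux.
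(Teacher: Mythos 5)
Your proof is correct, and it takes a genuinely different route from the paper's. The paper exploits differentiability head-on: it observes that $\widehat{h}_k(x) := f\left(\left(1+\frac{1}{k}\right)x\right) - f(x) = h_k(x/k)$ is a constant $c_k$, then for any fixed $y>0$ evaluates the difference quotient along $\varepsilon = y/k$ to get $yf'(y) = \lim_{k\rightarrow\infty} kc_k$; since the right-hand side does not depend on $y$, the function $xf'(x)$ is a positive constant $a$, and integrating $f'(x)=a/x$ yields the claim, with $f(0)=-\infty$ forced by monotonicity. You never differentiate anything: you telescope to get $f(kx)=f(x)+C_k$, derive complete additivity $C_{k\ell}=C_k+C_\ell$, pin down $C_k=a\ln k$ via the classical monotone-squeeze argument between powers of $2$ and powers of $k$ (the standard proof that monotone solutions of Cauchy's equation on the multiplicative group are logarithmic), and finish by noting that $\phi=f-a\ln$ is invariant under scaling by positive rationals and continuous, hence constant. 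Each approach has its merits: the paper's is shorter and identifies $a$ explicitly as $\lim_{k\rightarrow\infty}kc_k$, using exactly the hypotheses given; yours is more elementary and in fact proves a strictly stronger lemma, since differentiability enters only through continuity---and even continuity is dispensable in your last step, because monotonicity squeezes $f$ at irrational points between its values $f(1)+a\ln r$ at nearby rationals $r$. Your closing remark correctly identifies why the naive plan of differentiating the hypothesis stalls ($f'$ need not be continuous), but note that the paper's argument is not that plan: by computing the derivative at a fixed point as a limit of the constants $c_k$, it needs only the existence of $f'$, not its continuity. One small wording fix: the nondecreasingness of $(C_k)$ does not follow from its positivity, as your phrasing suggests, but directly from $C_{k+1}-C_k = h_k(1) = f(k+1)-f(k) > 0$; the fact is immediate, but that is the justification to cite.
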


\begin{proof}
Suppose that $f$ satisfies the condition in the lemma statement.
It suffices to show that there exists a constant $a > 0$ such that $f'(x) = a/x$ for all $x > 0$; once this is established, $f(0) = -\infty$ follows because $f$ is increasing.
Moreover, since $f'(x) > 0$ for all $x > 0$, this is equivalent to showing that $yf'(y) = zf'(z)$ for all $y,z > 0$.

Consider any positive integer~$k$.
Since the function $h_k(x)$ is constant on $(0,\infty)$, so is the function $\widehat{h}_k(x) := f\left(\left(1 + \frac{1}{k}\right)x\right) - f(x) = h_k(x/k)$.
Let $c_k > 0$ be the constant such that $\widehat{h}_k(x) = c_k$ for all $x > 0$.
Now, fix any $y,z > 0$.
We have
\begin{align*}
yf'(y) 
&= y\cdot\lim_{\varepsilon\rightarrow 0}\frac{f(y+\varepsilon) - f(y)}{\varepsilon} \\
&= y\cdot\lim_{k\rightarrow \infty}\frac{f(y+\frac{y}{k}) - f(y)}{y/k} \\
&= \lim_{k\rightarrow \infty}\frac{f(y+\frac{y}{k}) - f(y)}{1/k} = \lim_{k\rightarrow \infty}\frac{f\left(\left(1 + \frac{1}{k}\right)y\right) - f(y)}{1/k} = \lim_{k\rightarrow\infty} kc_k.
\end{align*}
Similarly, $zf'(z) = \lim_{k\rightarrow\infty} kc_k$.
We therefore conclude that $yf'(y) = zf'(z)$, as required.
\end{proof}

With \Cref{lem:helper} in hand, we are now ready to prove \Cref{thm:main}.

\begin{proof}[Proof of \Cref{thm:main}]
Suppose that an additive welfarist rule~$R$ with function $f$ returns an EF1 allocation for every profile with two agents in which it is possible to give both agents nonzero utility.
By \Cref{lem:helper}, it is sufficient to show that for every positive integer~$k$, the function $h_k(x) := f((k+1)x) - f(kx)$ is constant on $(0,\infty)$.

Assume for the sake of contradiction that $h_k(x)$ is not constant on $(0,\infty)$ for some~$k$.
This means that there exist $y,z > 0$ such that $h_k(y)\ne h_k(z)$.
Without loss of generality, suppose that $h_k(y) > h_k(z)$, that is, $f((k+1)y) - f(ky) > f((k+1)z) - f(kz)$.
Since $f$ is differentiable, and therefore continuous, there exists a value $\varepsilon\in (0,z)$ with the property that
\begin{align}
\label{eq:difference}
f((k+1)y) - f(ky) > f((k+1)z-\varepsilon) - f(kz-\varepsilon). \tag{*}
\end{align}

Now, consider a profile with $n = 2$ agents and $m = 2k+1$ goods such that
\begin{itemize}
\item $u_1(g_1) = 0$ and $u_1(g_j) = y$ for each $j\in\{2,\dots,m\}$;
\item $u_2(g_1) = z-\varepsilon$ and $u_2(g_j) = z$ for each $j\in\{2,\dots,m\}$.
\end{itemize}
Clearly, it is possible to give both agents nonzero utility in this profile.
Since $f$ is increasing and $z-\varepsilon > 0$, our additive welfarist rule~$R$ must allocate $g_1$ to agent~$2$.
Given this, if $R$ allocates at most $k-1$ goods to agent~$1$, then the EF1 condition is violated for agent~$1$.
On the other hand, if $R$ allocates at least $k+1$ goods to agent~$1$, then the EF1 condition is violated for agent~$2$.
Hence, $R$ must allocate exactly $k$ goods to agent~$1$.
In particular, doing so must yield at least as high welfare as allocating exactly $k+1$ goods to agent~$1$.
It follows that
\begin{align*}
f(ky) + f(kz + (z-\varepsilon)) \ge f((k+1)y) + f((k-1)z + (z-\varepsilon)).
\end{align*}
That is,
\begin{align*}
f(ky) + f((k+1)z-\varepsilon) \ge f((k+1)y) + f(kz-\varepsilon),
\end{align*}
or equivalently,
\begin{align*}
f((k+1)z-\varepsilon) - f(kz-\varepsilon) \ge f((k+1)y) - f(ky).
\end{align*}
However, this is a contradiction to \eqref{eq:difference}.
\end{proof}

\Cref{thm:main} can be extended to any fixed number $n\ge 3$ of agents, by adding $n-2$ extra agents and $n-2$ extra goods.
Each extra agent has utility~$1$ for a distinct extra good and $0$ for the remaining goods, while each original agent has utility~$0$ for all extra goods.
A similar argument can then be applied to establish the characterization.

\subsection*{Acknowledgments}

The author acknowledges support from the Singapore Ministry of Education under grant number MOE-T2EP20221-0001 and from an NUS Start-up Grant, and thanks the anonymous reviewer for helpful comments.

\bibliographystyle{plainnat}
\bibliography{main}

\end{document}